\newcommand{\sam}{n}
\newcommand{\mul}[1]{{n(#1)}}
\newcommand{\prev}[1]{{\varPhi({#1})}}
\newcommand{\cpmx}[1]{{S(#1)}}
\newcommand{\cpm}{{S}}
\newcommand{\estcpm}{{\hat{S}}}
\newcommand{\estcpmx}[1]{{\hat{S}(#1)}}
\newcommand{\dist}{p}
\newcommand{\distper}{p_{\sigma}}
\newcommand{\ps}[1]{{\dist(#1)}}
\newcommand{\perm}[1]{\sigma(#1)}
\newcommand{\sym}{x}
\newcommand{\seq}{\sym^{\sam}}
\newcommand{\est}{{\dist}_\sam}
\newcommand{\optest}{{\est''}_{\sym^{\sam}}}
\newcommand{\optestx}[1]{{\est''}_{\sym^{\sam}}(#1)}
\newcommand{\optestxp}[1]{{\est''}_{\sigma(\sym^{\sam})}(#1)}
\newcommand{\estxs}[1]{{\est}_{\sym^{\sam}}(#1)}
\renewcommand{\estxs}[1]{q_{\sym^{\sam}}(#1)}
\newcommand{\ests}[1]{\est({#1})}
\newcommand{\KL}[2]{D(#1||#2)}
\newcommand{\loss}[2]{L(#1,#2)}
\newcommand{\ed}{\stackrel{\text{def}}{=}}
\newcommand{\iidk}{\Delta_k}
\newcommand{\EE}{\mathbb{E}}
\newcommand{\cX}{\mathcal{X}}
\newcommand{\tcO}{\tilde{\mathcal{O}}}
\newcommand{\cP}{\mathcal{P}}
\newcommand{\cQ}{\mathcal{Q}}
\newcommand{\partit}{\mathbb{P}}
\newcommand{\iid}{\text{i.i.d.}}
\newcommand{\ie}{\text{i.e., }}
\newtheorem{Theorem}{Theorem}
\newtheorem{Lemma}[Theorem]{Lemma}
\newtheorem{Example}[Theorem]{Example}
\newcommand{\eg}{\text{e.g., }}
\newcommand{\ignore}[1]{}
\newcommand{\Xtn}{X^n}
\newcommand{\xts}{x^*}
\newcommand{\Sym}{X}
\newcommand{\Deltak}{\Delta_k}
\newcommand{\upto}{,\ldots,}
\newcommand{\Sets}[1]{\left\{#1\right\}}
\newcommand{\sets}[1]{\{#1\}}
\newcommand{\Paren}[1]{\left(#1\right)}
\newcommand{\bE}{\EE}
\newcommand{\bEXnp}{\operatorname*{\bE}_{X^n\sim p^n}\;}
\newcommand{\PP}{\mathbb{P}}
\newcommand{\cPkc}{\cP_k^{\text{const}}}
\renewcommand{\est}{q}
\renewcommand{\loss}{r}
\newcommand{\rns}[1]{\loss_n(#1)}
\newcommand{\rnqp}{\rns{q,p}}
\newcommand{\rnqP}{\rns{q,P}}
\newcommand{\rnqcP}{\rns{q,\cP}}
\newcommand{\rnqPp}{\rns{q,P'}}
\newcommand{\rncPp}{\rns{q,\cP'}}
\newcommand{\rncP}{\rns{\cP}}
\newcommand{\rnDk}{\rns{\Deltak}}
\newcommand{\rnss}[2]{\loss_n^{#1}(#2)}
\newcommand{\rnPPs}[1]{\rnss{\PP}{#1}}
\newcommand{\rnPPqcP}{\rnPPs{q,\cP}}
\newcommand{\rnPPpqcP}{\rnss{\PP'}{q,\cP}}
\newcommand{\rnP}{\rns{P}}
\newcommand{\rnPp}{\rns{P'}}
\newcommand{\rnPPcP}{\rnPPs{\cP}}
\newcommand{\rnPPpcP}{\rnss{\PP'}{\cP}}
\newcommand{\natest}{\cQ^{\text{nat}}}
\newcommand{\PPprm}{\PP_\sigma}
\newcommand{\rnNs}[1]{\loss_n^{\scriptscriptstyle\text{nat}}(#1)}
\newcommand{\rnNp}{\rnNs{\dist}}
\newcommand{\rnNss}[2]{\loss_n^{\scriptscriptstyle\text{nat}}(#1,#2)}
\newcommand{\rnNDk}{\rnNs{\Delta_k}}
\begin{document}
\title{Competitive Distribution Estimation}
\author{
\begin{tabular}[t]{c@{\extracolsep{10em}}c} 
 Alon Orlitsky & Ananda Theertha Suresh\\
\small\texttt{alon@ucsd.edu} & \small\texttt{asuresh@ucsd.edu} 
\end{tabular}
\vspace{2ex}\\
University of California, San Diego
}
\maketitle
\begin{abstract}
Estimating an unknown distribution from its samples
is a fundamental problem in statistics.
The common, min-max, formulation of this goal considers the 
performance of the best estimator over all distributions
in a class. It shows that with $n$ samples, distributions
over $k$ symbols can be learned to a KL divergence that
decreases to zero with the sample size $n$, but grows
unboundedly with the alphabet size $k$.

Min-max performance can be viewed as regret relative to an
oracle that knows the underlying distribution. 
We consider two natural and modest limits on the oracle's power.
One where it knows the underlying distribution only up to 
symbol permutations, and the other where it knows the exact
distribution but is restricted to use natural estimators that assign
the same probability to symbols that appeared equally many
times in the sample.

We show that in both cases the competitive regret reduces to
$\min(k/n,\tcO(1/\sqrt n))$, a quantity upper bounded uniformly
for every alphabet size. This shows that distributions can be
estimated nearly as well as when they are essentially known in
advance, and nearly as well as when they are completely known
in advance but need to be estimated via a natural estimator.
We also provide an estimator that runs in linear time and incurs
competitive regret of $\tcO(\min(k/n,1/\sqrt n))$, and show that
for natural estimators this competitive regret is inevitable.
We also demonstrate the effectiveness of competitive estimators using simulations.
\end{abstract}
\section{Introduction}
\label{sec:introduction}
\subsection{Background}
The basic problem of learning an unknown distribution from its
samples is typically formulated in terms of \emph{min-max} 
performance with KL-divergence loss.
Though simple and intuitive to state, 
its precise formalization requires a modicum of nomenclature. 

Let $\cP$ be a known collection of distributions over a discrete set
$\cX$, and let $X_1,X_2,\ldots$ be samples generated independently
according to $\dist$.
A distribution \emph{estimator}
$\est$ over $\cX$ 
associates with any observed sample sequence
$\xts\in\cX^*$ a distribution $\est_{\xts}$ over $\cX$.
The performance of $\est$ is evaluated in terms of a given distance measure,
and we will use the popular \emph{KL divergence}
\[
\KL{\dist}{\est}
\ed
\sum_{\sym \in \cX} \ps{\sym} \log \frac{\ps{\sym}}{\ests{\sym}}.
\]
KL divergence reflects the increase in the number
of bits over the entropy needed to compress the output of $p$
using an encoding based on $q$ and the log-loss of estimating
$p$ by $q$, see~\cite{CoverT06}.

Given $n$ samples $\Xtn\ed\Sym_1,\Sym_2\upto\Sym_{\sam}$
generated independently according to $\dist$, 
the expected loss of the estimator $\est$ is 
\[
\rnqp
=
\bEXnp[\KL{\dist}{\est_{{}_{\Xtn}}}],
\]
the worst-case loss of $\est$ for any distribution in $\cP$ is 
\begin{equation}
\label{eqn:loss_q_P}
\rnqcP
\ed
\max_{\dist\in\cP}\rnqp,
\end{equation}
and the lowest worst-case loss for $\cP$, achieved by the best estimator is
\begin{equation}
\label{eqn:loss_P}
\rncP
\ed
\min_{\est} \rnqcP.
\end{equation}
Namely, $\rncP$ is the min-max loss
\[
\rncP
=
\min_{\est}\max_{\dist\in\cP}\rnqp.
\]
Min-max performance can be viewed as regret relative to an
oracle that knows the underlying distribution. 
Thus we refer to the above quantity as regret from here on.

\subsection{Prior work}
The most natural and important collection of distributions
is the set of all distributions over the alphabet $\cX$.
To simplify notation, assume without loss of generality
that the alphabet is $[k]=\{1,2,\ldots k\}$, and then the
set of all distributions is the \emph{simplex} in $k$ dimensions,
\[
\Delta_k
\ed
\Sets{(p(1)\upto p(k))\ :\ 
\forall 1\le i\le k,  \,\,
p(i)\ge 0\text{ and }\sum_{i=1}^k p(i)=1}.
\]
\cite{Krichevsky98} introduced the problem of estimating
$\rns{\Deltak}$, and~\cite{BraessT04} showed that as $k/n\to0$,
\begin{equation}
\label{eqn:km1over2}
\rns{\Deltak} = \frac{k-1}{2n} + o\Paren{\frac kn}.
\end{equation}
This result specifies the rate at which distributions in $\Deltak$
can be approximated in KL divergence as the number of samples increases.
It also implies the upper bound of the well known 
$\frac{k-1}2\log n$ redundancy of \iid\ distributions,
derived by~\cite{KrichevskyT81}.
Other loss measures, including $\ell_1$, were considered
in~\cite{KamathOPS15}, and related results appeared in~\cite{HanJW14}.

Motivated by natural-language processing, bioinformatics, and other modern
applications, there has been a fair amount of recent interest in
evaluating and achieving the optimal regret in the non-asymptotic
regime, and when the sample size is not overwhelmingly larger than the
alphabet size. For example in English text processing, the alphabet
is English vocabulary whose size is comparable to the 
number of times a context has appeared in the corpus.

It can be shown that when the sample size $n$ is linear in 
the alphabet size $k$, $\rnDk$ is a constant,
and~\cite{Paninski04} showed that as $k/n\to\infty$, 
\[
\rnDk
=
\log\frac kn + o\Paren{\log\frac kn}.
\]
If follows that distributions cannot be well learned with a number
of samples that is comparable to their alphabet size.

Several modifications have been proposed to address this problem.
\cite{OrlitskySZ03} modified the loss to reflect
estimating the probabilities of each previously-observed symbol,
and all unseen symbols combined. They showed that the corresponding
regret can be upper bounded in terms of the number of samples $n$,
regardless of the alphabet size $k$.
\cite{McAllesterS00, DrukhM04, AcharyaJOS13} estimated
the combined probability of symbols that appeared a given number of times.
Several others have restricted the collections of distributions
in $\iidk$ to monotone, unimodal, or log-concave distributions, see~\eg~\cite{Birge87, ChanDSS13}.

In this paper we address the original problem, with KL divergence
regret, and the loss for the whole collection $\Deltak$.
However, instead of considering min-max regret we take a
competitive approach where we compare 
and show that it is possible to
learn the distribution with a uniformly-bounded regret.

\section{Competitive formulation}
\subsection{Background}
While~\eqref{eqn:km1over2} is asymptotically tight, it addresses the
worst-case regret over all possible distributions in $\Deltak$. 
For smaller distribution collections in $\Deltak$ lower regret may
be achieved.

Our goal is to derive a data-driven estimator that approaches the
performance of the best estimator for any reasonable sub-collection
of $\Deltak$.

\begin{Example}
Consider the constant-$i$ distribution over $[k]$,
\[
p_i(j)
=
\begin{cases}
1 & \text{ for } j=i,\\
0 & \text{ for } j\ne i,
\end{cases}
\]
and the collection of all constant distributions,
\[
\cPkc
\ed
\Sets{p_1\upto p_k}
\subseteq
\Deltak.
\] 
Clearly, for all $n\ge1$,
\[
\rns{\cPkc}
=
0
\]
as the data-driven estimator $\est$ that assigns probability $1$ to the seen
symbol $X_1$ and probability $0$ to all other $k-1$ symbols, estimates
$\dist$ exactly.
\end{Example}

Our goal is to derive a single estimator that simultaneously achieves
essentially the lowest regret possible for every reasonable
sub-collection of $\Deltak$.

Note that the definition of loss (regret) can be viewed as competing
with a person who knows the underlying distribution $\dist$ and can
use any estimator $\est$, naturally choosing $\dist$ itself. 

Two simple relaxations of the problem clearly come to mind.
First, competing with a person who has only partial information
about $\dist$. And second, competing with a person who knows
$\dist$ but can only use a restricted type of estimator, in
particular, only  estimators that would arise naturally.

We define these two modified regrets and show that under both the modified regrets, one can uniformly bound the regret.

\subsection{Competing with partial information}
One way to weaken an oracle-based estimator is to provide it with less
information. Consider an oracle who, instead of knowing $\dist\in\cP$ exactly,
has only partial knowledge of $\dist$. For simplicity, we interpret
partial knowledge as knowing the value of $f(\dist)$ for a given function $f$
over $\cP$. 
Any such $f$ partitions $\cP$ into subsets, each corresponding
to one possible value of $f$, and henceforth, we will use
this equivalent formulation, namely $\PP$ is a known partition 
of $\cP$, and the oracle knows the unique partition part $P$
such that $\dist\in P\in\PP$. 

For every partition part $P\in\PP$, an estimator $\est$ incurs
the worst-case regret in~\eqref{eqn:loss_q_P},
\[
\rnqP
=
\max_{p\in P}\rnqp.
\]
The oracle, knowing $P$, incurs the least worst-case regret~\eqref{eqn:loss_P},
\[
\rnP
=
\min_q\rnqP.
\]
The \emph{competitive regret} of $\est$ over the oracle, for all distributions
in $P$ is
\[
\rnqP-\rnP,
\]
the competitive regret over all partition parts and all
distributions in each is 
\[
\rnPPqcP
\ed
\max_{P\in\PP}
\Paren{\rnqP-\rnP},
\]
and the best possible competitive regret is
\[
\rnPPcP
\ed
\min_{\est} \rnPPqcP.
\]
Consolidating the intermediate definitions,
\[
\rnPPcP
=
\min_{\est}
\max_{P\in\PP}
\Paren{\max_{p\in P}\rnqp-\rnP}.
\]
Namely, an oracle-aided estimator who knows the partition part
incurs a worst-case regret $\rnP$ over each part $P$,
and the competitive regret $\rnPPcP$ of data-driven estimators
is the least overall increase in the part-wise regret due
to not knowing $P$.
The following examples evaluate $\rnPPcP$ for the two simplest partitions
of any collection $\cP$.
\begin{Example}
The \emph{singleton partition} consists of $|\cP|$ parts,
each a single distribution in $\cP$,
\[
\PP_{|\cP|}
\ed
\Sets{\sets{p}:p\in\cP}.
\]
An oracle-aided estimator that knows the part containing $\dist$ knows $\dist$.
The competitive regret of data-driven estimators is therefore the min-max regret,
\begin{align*}
\rnss{\PP_{|\cP|}}{\cP}
&=
\min_{\est}
\max_{p\in\cP}
\Paren{\rns{\est,\sets p}-\rns{\sets p}}\\
&=
\min_{\est}
\max_{p\in\cP}
\rnqp\\
&=
\rncP,
\end{align*}
where the middle equality follows as $\rns{\est,\sets p}=\rnqp$,
and $\rns{\sets p}=0$.
\end{Example}

\begin{Example}
The \emph{whole-collection} partition has only one part,
the whole collection $\cP$,
\[
\PP_1
\ed
\Sets{\cP}.
\]
An estimator aided by an oracle that knows the part containing $\dist$
has no additional information, hence no advantage over a data-driven estimator,
and the competitive regret is 0,
\begin{align*}
\rnss{\PP_1}{\cP}
&=
\min_{\est}
\max_{P\in\sets{\cP}}
\Paren{\max_{p\in P}\rnqp-\rnP}\\
&=
\min_{\est}
\Paren{\max_{p\in\cP}\rnqp-\rncP}\\
&=
\min_{\est}
\max_{p\in\cP}
\Paren{\rnqp}-\rncP\\
&=
\rncP-\rncP\\
&=0.
\end{align*}
\end{Example}

The examples show that for the coarsest partition of $\cP$, 
into a single part, the competitive regret is the lowest
possible, 0, while for the finest partition, into singletons,
the competitive regret is the highest possible, $\rncP$.

A partition $\PP'$ \emph{refines} a partition $\PP$
if every part in $\PP$ is partitioned by some parts in $\PP$.
For example $\Sets{\sets{a,b},\sets{c},\sets{d,e}}$
refines $\Sets{\sets{a,b,c},\sets{d,e}}$.
It is easy to see that if $\PP'$ refines $\PP$ then for every $\est$
\begin{equation}
\label{eq:refine}
\rnPPpqcP
\ge
\rnPPqcP
\end{equation}
The definition implies that if $\cP'\subseteq\cP$ then
$\rncPp\le\rncP$, hence for every $\est$,
\begin{align*}
\rnPPpqcP
&=  
\max_{P'\in\PP'}\Paren{\rnqPp - \rnPp}\\
&=
\max_{P\in\PP}\max_{P\supseteq P'\in\PP'}\Paren{\rnqPp - \rnPp}\\
&\ge
\max_{P\in\PP}\max_{P\supseteq P'\in\PP'}\Paren{\rnqPp - \rnP}\\
&=
\max_{P\in\PP}\bigl(\max_{P\supseteq P'\in\PP'}\rnqPp - \rnP\bigr)\\
&=
\max_{P\in\PP}\Paren{\rnqP - \rnP}\\
&=
\rnPPqcP.
\end{align*}

Note that this notion of competitiveness has appeared in several contexts.
In data compression it is called
\emph{twice-redundancy}~\cite{Ryabko84,
  Ryabko90,BontempsBG14,BoucheronGO14}, 
while in statistics it often called \emph{adaptive}
or \emph{local min-max} see \eg \cite{DonohoJ94, AbramovichBDJ06, BickelKRW93, BarronBM99, Tsybakov04}, 
and recently in property testing it is referred as competitive~\cite{AcharyaDJOP11,AcharyaDJOPS12,AcharyaJOS13b} or \emph{instance-by-instance}~\cite{ValiantV14}.

\subsection*{Permutation class}

Considering the collection $\Deltak$ of all distributions over $[k]$,
if follows that as we start with single-part partition $\Sets{\Deltak}$
and keep refining it till the oracle knows $p$, the competitive regret
of estimators will increase from 0 to $\rns{\Deltak}$.

A natural question is therefore how much information can the oracle
have and still keep the competitive regret low. We show that the
oracle can know the distribution exactly up to permutation, and
still the relative regret will be very small.

Call two distributions $p$ and $p'$ over $[k]$ \emph{permutation equivalent} if
there is a permutation $\sigma$ of $[k]$ such that
\[
p'_{\sigma(i)}
=
p_i,
\]
for example, over $[3]$, $(0.5, 0.3, 0.2)$ and $(0.3, 0.5, 0.2)$
are permutation equivalent. Permutation equivalence is clearly
an equivalence relation, and hence partitions the collection of
distributions of $[k]$ into equivalence classes. 
Let $\PPprm$ be the corresponding partition.
We construct estimators $q$ that uniformly bound $\rnss{\PPprm}{q, \Delta_k}$,
thus the same estimator uniformly bounds $\rnss{\PP}{q,\Delta_k}$
for any coarser partition of $\Delta_k$ such as partitions such 
that each class contains distributions with same entropy, or same sparse-support.
%


\subsection{Competing with natural estimators}
Another restriction on the oracle-aided estimator is to still
let it know  $\dist$ exactly, but force it to be ``natural'', namely,
to assign the same probability to all symbols that appeared the
same number of times in the sample. For example, for the observed
sample $a,b,c,a,b,d,e$, to assign the same probability
to $a$ and $b$, and the same probability to $c$, $d$, and $e$.

\ignore{
Formally, let $\mul{\sym}$ denote the number of times a symbol $\sym$
appeared in the sample $\Xtn$.
An estimator $\est$ is \emph{natural} if $\estxs{\sym} =
\estxs{\sym'}$ whenever $\mul{\sym} = \mul{\sym'}$.
}

Since data-driven estimators derive all their knowledge of the
distribution from the data, we expect them to be natural. We
also saw in the previous section that natural estimators are
optimal under for the permutation-invariant oracle.

We now compare the regret of data-driven estimators to that of
\emph{natural oracle-aided} estimators.
For a distribution $\dist$, the lowest regret of natural estimators is
\[
\rnNp
\ed
\min_{\est \in \natest}\rnqp,
\]
where $\natest$ is the set of all natural estimators.
The regret of an estimator $q$ relative to the best
natural-estimator designed with knowledge of $p$ is 
\[
\rnNss{\est}{\dist}
=
\rnqp - \rnNp.
\]
The regret of data-driven estimators over $\cP$ is therefore,
\[
\rnNs{\cP}
=
\min_{\est} \max_{\dist \in \cP} \rnNss{\est}{\dist}.
\]
We show that indeed for the class of discrete distributions over support $[k]$, \ie $\Delta_k$, $\rnNDk$ is uniformly bounded.
%

\ignore{
--------
Let $\distper$ for a permutation $\sigma$ is given as follows
\[
\distper(x) = \dist(\perm{x}).
\]
We consider a permutation partition class defined as follows.
$\PP_{\sigma}$ is the finest partition possible such that 
if $\dist\in \cP \in \PP_{\sigma}$, then  for every $\sigma$ should be in the same class $\cP$.
In other words its the partition such that every single distribution within each class has the same multiset.
Such classes are also natural $\ie$ if we expect an estimator to work on distribution $\dist(1)=1, \dist(i)=0 \forall i \neq 1$,
then we expect it to work well for distribution $\dist'(2) = 1, \dist'(i) = 0 \forall i \neq 2$.
}
The rest of the paper is organized as follows.
In Section~\ref{sec:results}, we state our results and in Section~\ref{sec:proofs}, provide the proofs.
In Section~\ref{sec:experiments}, we compare the competitive estimator to other min-max motivated estimators using experiments.

\section{Results}
\label{sec:results}
We show that the estimator proposed in~\cite{AcharyaJOS13} (call it  $q'$)
is competitive if partial information is known or
if we restrict the class of estimators to be natural.
In Theorem~\ref{thm:proposed}, we prove
\begin{equation}
\label{eq:main_result}
\rnss{ \PP_{\sigma}}{\est',\Delta_k} \leq \rnNss{\est'}{\Delta_k} \leq \tcO \left( \min \left( \frac{1}{\sqrt{n}},\frac{k}{n} \right)\right).
\end{equation}
Thus for any coarser partition $\PP$, the same result holds.
Here $\tcO$ and later $\tilde\Omega$ hide multiplicative logarithmic factors.
\ignore{
Since in Equation~\eqref{eq:refine}, we show that any coarser partition incurs smaller error, we have for any $\PP$ that is a coarser than $\PP_{\sigma}$,
\[
\rnss{\PP}{\est', \Delta_k} \leq  \tcO \left( \min \left( \frac{1}{\sqrt{n}},\frac{k}{n} \right)\right).
\]}
Together with Lemma~\ref{lem:labelbounded}, the lower bounds in~\cite{AcharyaJOS13} can be extended to show that \[
 \rnNDk \geq \tilde\Omega\left( \min \left( \frac{1}{\sqrt{n}},\frac{k}{n} \right)\right).
\]
Thus the performance of the estimator proposed in~\cite{AcharyaJOS13} is nearly-optimal compared to the class of natural estimators.
Equation~\eqref{eq:main_result} immediately implies
$\rnss{\PP_{\sigma}}{\Delta_k} \leq \rnNDk \leq 
\tcO \left( \min \left( \frac{1}{\sqrt{n}},\frac{k}{n} \right)\right)$. However Equation~\eqref{eqn:km1over2} and  the fact that 
the min-max estimator proposed in~\cite{BraessT04} is natural imply
\begin{align*}
\rnNDk & = \min_{\est} \max_{p \in \iidk} \left( \EE[\KL{p}{\est}] - \rnNp \right) \\
& \leq \min_{\est} \max_{p \in \iidk} \EE[\KL{p}{\est}] \\
& = \frac{(k-1)(1+o(1))}{2n}.
\end{align*}
The above equation together with 
Equation~\eqref{eq:main_result}
implies a stronger bound:
\[ 
\rnss{\PP_{\sigma}}{\Delta_k} \leq \rnNDk \leq \min \left( \tcO \left( \frac{1}{\sqrt{n}} \right), \frac{(k-1)(1+o(1))}{2n} \right).
\]

\ignore{
\section{Basic properties}
We first show that if we have an upper bound on $\rnPPpqcP$,
then the same upper bound holds for every finer partition of $\PP'$.
As a corollary, it implies that our results hold for every coarse
partition of $\PP_{\sigma}$.
\begin{Lemma}
\label{lem:finer}
If $\PP'$ refines $\PP$, then for every estimator $\est$,
\[
\rnPPqcP \le \rnPPpqcP,
\]
hence, 
\[
\rnPPcP \leq \rnPPpcP.
\]
\end{Lemma}
\begin{proof}
Observe that $\rnP$ increases, namely,
if $\cup_{i} P_i = P$ and for all $i,j$, $P_i \cap P_j = \emptyset$, then
\begin{equation}
\label{eq:union}
\max_{i\in \{1,2\}} \rns{P_i} \le \rnP.
\end{equation}
Let $\PP=\{P_1,P_2\ldots\}$ and $\PP' = \{P'_1,P'_2,\ldots
\}$,
then
\begin{align*}
  \rnPPqcP
& =  
\max_{P \in \PP} \left( \max_{\dist \in P} \EE[\KL{\dist}{\est}] - \rnP \right) \\
& = 
\max_{P'\in \PP'} \max_{P: P \subseteq P'} \left( \max_{\dist \in P} \EE[\KL{\dist}{\est}] - \rnP \right) \\
& \stackrel{(a)}{\geq }
\max_{P'\in \PP'} \left( \max_{P: P \subseteq P'}  \max_{\dist \in P} \EE[\KL{\dist}{\est}]  -\max_{P: P \subseteq P'}  \rnP \right) \\
&  \stackrel{(b)}{\geq  }
\max_{P'\in \PP'} \left(  \max_{\dist \in P'} \EE[\KL{\dist}{\est}]  - \rns{P'} \right)\\
& = \rnPPpqcP.
\end{align*}
$(a)$ follows from the fact that maximum of differences is bigger than difference of maximums. Equation~\ref{eq:union} results in $(b)$.
\end{proof}
}
\section{Proofs}
\label{sec:proofs}
The proof consists of two parts. We first show that for every estimator $q$, $ \rnss{\PPprm}{q,\iidk} \leq \rnNss{\est}{\iidk}$ and then upper bound $\rnNs{\est,\iidk}$ using results on combined probability mass.
\subsection{Relation between $ \rnss{\PPprm}{q,\iidk}$ and $\rnNs{\est,\iidk}$}
We now show an auxiliary result that helps us relate $\rnNss{\est}{\iidk}$ to $\rnss{\PPprm}{q,\iidk}$. For a symbol $x$, let $\mul{x}$ denote the number of times it appears in the sequence.
\begin{Lemma}
\label{lem:technical}
For every class $P \in \partit_{\sigma}$, 
\[
\rnP \geq \max_{\dist \in P} \rnNs{\dist}.
\]
\end{Lemma}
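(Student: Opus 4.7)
The plan is to take an estimator $q^*$ nearly achieving $\rnP$ on the permutation class $P$ and symmetrize it, in two stages, into a natural estimator whose regret on any single $p\in P$ is no larger than $\rnP$. Since $P$ is a permutation equivalence class, any permutation $\sigma$ of $[k]$ sends $P$ to itself, so averaging over $\sigma$ preserves the guarantee on $P$; this is the essential fact to exploit.

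First I would reduce to the case where $q^*$ depends on the sample only through its multiset (empirical type). Given any $q^*$ achieving the minimum in $\rnP$, define $q'_{x^n}(x) \ed \frac{1}{n!}\sum_{\pi} q^*_{\pi(x^n)}(x)$, where $\pi$ ranges over permutations of the $n$ positions. Since $X^n$ and $\pi(X^n)$ have the same distribution under any i.i.d.\ source, and KL divergence is convex in its second argument, one gets $\rns{q',p}\le \rnP$ for every $p\in P$, and $q'$ clearly depends only on the multiset of $x^n$. So replacing $q^*$ by $q'$, I may assume $q^*$ is position-symmetric.

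Next I symmetrize over the alphabet: define
\[
\bar q_{x^n}(x) \ed \frac{1}{k!}\sum_{\sigma\in S_k} q^*_{\sigma(x^n)}(\sigma(x)),
\]
where $\sigma(x^n)$ means coordinate-wise application. A short change-of-variables computation (substituting $y=\sigma(x)$ inside the KL) shows $\rns{q^*_\sigma,p}=\rns{q^*,\sigma(p)}$ for the individual permuted estimators, so convexity of KL gives
\[
\rns{\bar q,p}\ \le\ \frac{1}{k!}\sum_\sigma \rns{q^*,\sigma(p)}\ \le\ \max_{p'\in P}\rns{q^*,p'}\ =\ \rnP,
\]
using precisely that $\sigma(p)\in P$.

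The main obstacle is the naturalness of $\bar q$, and this is where the first reduction pays off. Fix $x^n$ and symbols $x,x'$ with $\mul{x}=\mul{x'}$, and let $\tau=(x\ x')$ be the transposition. Substituting $\sigma\mapsto\sigma\tau$ in the defining sum gives
\[
\bar q_{x^n}(x')=\frac{1}{k!}\sum_\sigma q^*_{\sigma\tau(x^n)}(\sigma(x)).
\]
Now $\tau(x^n)$ and $x^n$ have identical multisets (that is the point of $\mul{x}=\mul{x'}$), so $\sigma\tau(x^n)$ and $\sigma(x^n)$ do too; since $q^*$ is a function of the multiset alone, $q^*_{\sigma\tau(x^n)}=q^*_{\sigma(x^n)}$, and the displayed expression equals $\bar q_{x^n}(x)$. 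Thus $\bar q\in\natest$. Combining with the KL bound above, $\rnNs{p}\le\rns{\bar q,p}\le\rnP$ for every $p\in P$, and taking the maximum over $p\in P$ yields the lemma.
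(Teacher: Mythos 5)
Your proof is correct, and it takes a genuinely different route from the paper's. The paper exhibits an \emph{explicit} natural estimator
\[
{\est''}_{\sym^{\sam}}(y) = \frac{\sum_{\sigma}\dist(\sigma(\sym^{\sam}y))}{\sum_{\sigma'}\dist(\sigma'(\sym^{\sam}))},
\]
shows by an averaging argument that it is minimax-optimal over $P$, observes it is natural, and then applies $\min\max \ge \max\min$ over the class of natural estimators. You instead start from an arbitrary minimax-optimal $q^*$ and build a natural estimator by a two-stage symmetrization: first over positions of the sample (so the estimator depends only on the empirical multiset), then over the alphabet. Convexity of $\KL{p}{\cdot}$ together with the orbit-invariance $\sigma(p)\in P$ bounds the symmetrized estimator's risk by $\rnP$ for each fixed $p\in P$, and the position-symmetry is exactly what makes the transposition argument $\sigma\mapsto\sigma\tau$ close up to establish naturalness. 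Both proofs are symmetrization arguments at heart, but yours is ``abstract'' (symmetrize an unknown optimizer) whereas the paper's is ``concrete'' (write down the Bayes-type estimator for the uniform prior on the orbit). The paper's version has the side benefit of identifying the optimal natural estimator in closed form, which is later useful (Lemma~\ref{lem:bestnatural} gives its marginal form $\cpmx{n(x)}/\prev{n(x)}$); your version is arguably more robust since it needs no guess about what the optimizer looks like. One small point worth flagging in a polished write-up: the existence of an exact minimizer $q^*$ deserves a word (compactness, or run the argument with an $\varepsilon$-minimizer and let $\varepsilon\to0$); the paper sidesteps this by producing the minimizer directly.
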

\begin{proof}
We first show that the estimator that there is an optimal estimator $\est$ that is natural. In particular,
let 
\[
\optestx{y} = 
\frac{\sum_{\sigma \in \Sigma_k}  \dist(\sigma(\seq y))} {\sum_{\sigma' \in \Sigma_k} \dist(\sigma'(\seq))},
\]
where $\Sigma_k$ is the set of all permutations of $k$ symbols.
We show that $\optestx{y}$ is an optimal estimator for $P$.
Since $\optestx{y} = \optestxp{\sigma(y)}$ for any permutation $\sigma$, the estimator achieves the same loss for every $\dist \in P$
\begin{equation}
\label{eq:sameforall}
 \max_{\dist \in P} \rns{\optest,\dist} 
= \frac{1}{k!} \sum_{\sigma \in \Sigma_k} \rns{\optest,\dist(\sigma(\cdot))}.
\end{equation}
For any estimator $q$,
\begin{align*}
\max_{p \in P} \EE[\KL{\dist}{\est}] 
& \stackrel{(a)}{\geq} \frac{1}{k!} \sum_{\sigma \in \Sigma_k}  \EE_{\dist(\sigma(\cdot))}[\KL{\dist(\sigma(\cdot))}{\est}] \\
& \stackrel{(b)}{=}  \frac{1}{k!} \sum_{\sigma \in \Sigma_k}
\sum_{\seq \in \cX^n} \sum_{y \in \cX} p(\sigma(x^ny)) \log \frac{1}{\estxs{y}} - H(p) \\
& =  \frac{1}{k!} \sum_{\seq \in \cX^n}
    \sum_{\sigma \in \Sigma_k}
   \sum_{y \in \cX} p(\sigma(x^ny)) \log \frac{1}{\estxs{y}}  - H(p) \\
& \stackrel{(c)}{\geq}  \frac{1}{k!} \sum_{\seq \in \cX^n}
    \sum_{\sigma \in \Sigma_k}
    \sum_{y \in \cX}   p(\sigma(x^ny)) \log \frac{  \sum_{\sigma' \in \Sigma_k}
   p(\sigma'(x^n))}{  \sum_{\sigma'' \in \Sigma_k}
 p(\sigma''(x^ny))} - H(p)\\
& =  \frac{1}{k!} \sum_{\sigma \in \Sigma_k}  \sum_{\seq \in \cX^n} \sum_{y \in \cX} \dist{(\sigma(\seq y))} \log \frac{1}{\optestx{y}} - H(p) \\
& \stackrel{(d)}{=}  \frac{1}{k!} \sum_{\sigma \in \Sigma_k} \rns{\optest,\dist(\sigma(\cdot))}.
\end{align*}
$(a)$ follows from the fact that maximum is larger than the average.
$(b)$ follows from the fact that every distribution in $P$ has the same entropy. Non-negativity of KL divergence implies $(c)$. 
All distributions in $P$ has the same entropy and hence $(d)$.
Hence together with Equation~\eqref{eq:sameforall}
\begin{align*}
\rnP & = \min_{\est} \max_{p \in P} \EE[\KL{p}{q}] \\
& \geq \frac{1}{k!} \sum_{\sigma \in \Sigma_k} \rns{\optest,\dist(\sigma(\cdot))} \\
& = \max_{p \in P} \rns{\optest, p}.
\end{align*}
Hence $\optest$ is an optimal estimator. $\optest$ is natural as if $n(y) = n(y')$, then $\optestx{y} = \optestx{y'}$. Since there is a natural estimator that achieves minimum in $\rnP$,
\begin{align*}
\rnP 
& = \min_{\est} \max_{\dist \in P} \EE[\KL{\dist}{\est}] \\
& = \min_{\est \in \natest} \max_{\dist \in P} \EE[\KL{\dist}{\est}] \\
& \geq\max_{\dist \in P}  \min_{\est \in \natest}  \EE[\KL{\dist}{\est}] \\
& = \max_{\dist \in P} \rnNp,
\end{align*}
where the last inequality follows from the fact that min-max is bigger than max-min.
\end{proof}
We now relate $\rnNss{\est}{\iidk}$ to $\rnss{\PPprm}{q,\iidk}$.
\begin{Lemma}
\label{lem:onelessthanother}
For every estimator $\est$,
\[
\rnss{\PPprm}{q,\iidk} \leq \rnNss{\est}{\iidk}.
\]
\end{Lemma}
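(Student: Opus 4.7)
The plan is to chain together Lemma~\ref{lem:technical} with an elementary max-of-difference inequality, and then take a max over partition parts. The key observation is that Lemma~\ref{lem:technical} already does the heavy lifting by comparing the part-wise oracle regret $\rnP$ against the per-distribution natural-oracle regret $\rnNs{\dist}$; all that remains is a routine rewriting.

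First, fix any $P \in \PPprm$. By the definition of $\rnss{\PPprm}{q,\iidk}$ as a max over parts, the claim reduces to proving, for each such $P$, the inequality
\[
\rnqP - \rnP \;\leq\; \max_{\dist \in P} \rnNss{q}{\dist}.
\]
I would expand $\rnqP = \max_{\dist \in P} \rnqp$, apply Lemma~\ref{lem:technical} to replace $\rnP$ with the larger quantity $\max_{\dist \in P} \rnNs{\dist}$, and then use the elementary fact $\max_a f(a) - \max_a g(a) \leq \max_a \bigl(f(a) - g(a)\bigr)$ with $f(\dist) = \rnqp$ and $g(\dist) = \rnNs{\dist}$. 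This yields
\[
\max_{\dist \in P} \rnqp - \max_{\dist \in P} \rnNs{\dist} \;\leq\; \max_{\dist \in P} \bigl(\rnqp - \rnNs{\dist}\bigr) \;=\; \max_{\dist \in P} \rnNss{q}{\dist},
\]
which is exactly the desired per-part bound.

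Finally, I would take the max over $P \in \PPprm$ on both sides. Since the classes of $\PPprm$ partition $\iidk$, the right-hand side collapses to $\max_{\dist \in \iidk} \rnNss{q}{\dist} = \rnNss{q}{\iidk}$, completing the proof. The only real content is Lemma~\ref{lem:technical}; the present lemma is essentially a bookkeeping step turning a per-part oracle statement into a global natural-oracle statement, so I do not anticipate any genuine obstacle beyond keeping the two different maxes (over parts vs.\ over distributions within a part) straight.
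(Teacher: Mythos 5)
Your proposal is correct and matches the paper's proof essentially line for line: apply Lemma~\ref{lem:technical} to replace $\rnP$ by the smaller quantity $\max_{\dist\in P}\rnNp$, then use the fact that a difference of maxima is bounded by the maximum of differences, and finally observe that the double max over $P\in\PPprm$ and $\dist\in P$ collapses to a single max over $\iidk$. No discrepancy with the paper's argument.
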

\begin{proof}
\begin{align*}
\rnss{\PPprm}{q,\iidk} 
& = \max_{P \in \partit_{\sigma}} \left(\max_{\dist\in P}\EE[\KL{\dist}{\est}] - \rnP \right) \\
& \stackrel{(a)}{\leq}  \max_{P \in \partit_{\sigma}} \left(\max_{\dist\in P}\EE[\KL{\dist}{\est}] - \max_{\dist \in P} \rnNp \right) \\
& \stackrel{(b)}{\leq}  \max_{P \in \partit_{\sigma}} \max_{\dist\in P} \left(\EE[\KL{\dist}{\est}] - \rnNp \right) \\
& = \max_{\dist \in \iidk}  \left(\EE[\KL{\dist}{\est}] - \rnNp \right) \\
& = \rnNs{\est,\iidk}.
\end{align*}
$(a)$ follows from Lemma~\ref{lem:technical}.
Difference of maximums is smaller than maximum of differences, hence $(b)$.
\end{proof}
\subsection{Relation between $\rnNs{q,\iidk}$ and combined probability estimation}

We now relate the regret in estimating distribution to that of estimating the combined or total probability mass, defined as follows.
For a sequence $\seq$, let $\prev{t}$ denote the number of symbols appearing $t$ times   and $\cpmx{t}$ denote the total probability of symbols appearing $t$ times.
Similar to KL divergence between distributions, we define KL divergence between $\cpm$ and their estimates $\estcpm$ as 
\[
\KL{\cpm}{\estcpm} = \sum^n_{t=0} \cpmx{t} \log \frac{\cpmx{t}}{\estcpmx{t}}.
\]
We find the best natural estimator that minimizes $\rnNp$ in the next lemma.
\begin{Lemma}
\label{lem:bestnatural}
Let $q^*(x) = \frac{\cpmx{n(x)}}{\prev{n(x)}}$, then
\[
q^* = \arg \min_{q \in \natest} \rnqp
\]
and 
\[
\rnNp =  \EE  \left[\sum^{n}_{t=0} \cpmx{t} \log  \frac{\prev{t}}{\cpmx{t}} \right] - H(p).
\]
\end{Lemma}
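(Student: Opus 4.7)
The plan is to minimize pointwise over each realization $x^n$, reducing the problem to a one-dimensional optimization indexed by multiplicity.

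Fix an observed sequence $x^n$. Any natural estimator $q_{x^n}$ depends on $x^n$ only through the multiplicities, so there exist numbers $q_0,q_1,\ldots,q_n$ with $q_{x^n}(y)=q_{n(y)}$. Since $\sum_y q_{x^n}(y)=1$ and since exactly $\prev{t}$ symbols have multiplicity $t$, the normalization reads $\sum_{t=0}^n \prev{t}\, q_t = 1$. Grouping the KL sum by multiplicity,
\[
\KL{p}{q_{x^n}} = -H(p) - \sum_{t=0}^n \cpmx{t}\log q_t,
\]
where I use $\sum_{y:n(y)=t}p(y)=\cpmx{t}$. Since $\sum_t \cpmx{t}=1$, both $(\cpmx{t})_{t=0}^n$ and $(\prev{t}q_t)_{t=0}^n$ are probability vectors over $\{0,1,\ldots,n\}$.

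Now I would apply Gibbs' inequality to these two distributions:
\[
0 \;\le\; \sum_{t=0}^n \cpmx{t}\log\frac{\cpmx{t}}{\prev{t}q_t}
\;=\; -\sum_t \cpmx{t}\log q_t - \sum_t \cpmx{t}\log\frac{\prev{t}}{\cpmx{t}},
\]
with equality iff $\prev{t}q_t = \cpmx{t}$ for every $t$ with $\cpmx{t}>0$, i.e., iff $q_t = \cpmx{t}/\prev{t}$, which is precisely $q^*(x)=\cpmx{n(x)}/\prev{n(x)}$. Plugging this into the previous display gives, pointwise in $x^n$,
\[
\KL{p}{q_{x^n}} \;\ge\; -H(p) + \sum_{t=0}^n \cpmx{t}\log\frac{\prev{t}}{\cpmx{t}},
\]
with equality attained by $q^*$. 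Taking expectations over $X^n\sim p^n$ yields both conclusions: $q^*$ minimizes $\rnqp$ over natural estimators, and
\[
\rnNp \;=\; \bE\!\left[\sum_{t=0}^n \cpmx{t}\log\frac{\prev{t}}{\cpmx{t}}\right] - H(p).
\]

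I do not expect a real obstacle here: the only mild subtlety is to verify that pointwise minimization over $x^n$ is legitimate (the natural-estimator constraint is a per-sequence constraint on $q_{x^n}$, not a constraint coupling different sequences), and to handle the edge case $\cpmx{t}=0$, where the corresponding term in the KL sum vanishes by the convention $0\log 0 = 0$ and the value of $q^*(x)$ for $x$ with $n(x)=t$ is irrelevant since $p$ places no mass on such symbols. Once those are noted, the Gibbs step does all the work.
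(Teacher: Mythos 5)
Your proof is correct and follows essentially the same route as the paper: both group the KL sum by multiplicity, observe that $(\cpmx{t})_t$ and $(\prev{t}q_t)_t$ are probability vectors over $\{0,\ldots,n\}$, and invoke non-negativity of KL divergence (Gibbs' inequality) to identify $q^*_t=\cpmx{t}/\prev{t}$ as the pointwise minimizer before taking expectations. Your write-up is a bit more explicit about why per-sequence minimization is legitimate and about the $\cpmx{t}=0$ edge case, but these are presentational refinements rather than a different argument.
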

\begin{proof}
For a sequence $\seq$ and estimator $q$,
\begin{align*}
\sum_{\sym \in \cX} \ps{\sym} \log \frac{1}{\estxs{\sym}}
- 
\sum^{n}_{t=0} \cpmx{t} \log \frac{\prev{t}}{\cpmx{t}} 
& = \sum^{n}_{t=0} \sum_{\sym: \mul{\sym}=t} 
\ps{\sym} \log \frac{1}{\estxs{\sym}} 
-
\sum^{n}_{t=0} \cpmx{t} \log \frac{\prev{t}}{\cpmx{t}}\\
& = \sum^{n}_{t=0} \cpmx{t} 
\log \frac{1}{\estxs{\sym}} 
-
\sum^{n}_{t=0} \cpmx{t} \log \frac{\prev{t}}{\cpmx{t}}\\
& = \sum^{n}_{t=0} \cpmx{t} \log \frac{\cpmx{t}}{\prev{t}\estxs{\sym}} \\
& \geq 0,
\end{align*}
where the last inequality follows from the fact that $\sum_{t} \cpmx{t} = \sum_{t} \prev{t} \estcpmx{t} = 1$ and KL divergence is non-negative. Furthermore the last inequality is achieved only by the estimator that assigns $\est^*(x) = \frac{\cpmx{n(x)}}{\prev{n(x)}}$.
Hence,
\[
\rnNp = \min_{\est \in \natest} \EE \left[\sum_{\sym \in \cX} \ps{\sym} \log \frac{\ps{\sym}}{\estxs{\sym}}
\right]
=  -H(p) + \EE  \left[\sum^{n}_{t=0} \cpmx{t} \log  \frac{\prev{t}}{\cpmx{t}} \right] .
\]

\end{proof}
Since the natural estimator assigns same probability to symbols that appear the same number of times, estimating probabilities is same as estimating the total probability of symbols appearing a given number of times. We formalize it in the next lemma.
\begin{Lemma}
For a natural estimator $\est$
%
let  $\estcpmx{t} = \sum_{x: \mul{x} = t} \ests{x}$, then
\[
\rns{\est,\dist} = \EE[\KL{\cpm}{\estcpm}].
\]
\end{Lemma}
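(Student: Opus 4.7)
The plan is to exploit naturality to collapse the per-symbol KL divergence into a per-multiplicity one, and then read off the identity. The single structural observation driving everything is this: because $\ests{\sym}$ depends on $\sym$ only through $\mul{\sym}$, the definition $\estcpmx{t}=\sum_{\sym:\mul{\sym}=t}\ests{\sym}$ combined with the fact that there are exactly $\prev{t}$ symbols of multiplicity $t$ in $\seq$ pins down $\ests{\sym}=\estcpmx{\mul{\sym}}/\prev{\mul{\sym}}$. Everything downstream is algebra on this representation.

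Next I would expand $\KL{\dist}{\est}=-H(\dist)-\sum_{\sym}\ps{\sym}\log\ests{\sym}$, split the outer sum by $t=\mul{\sym}$, and pull the now-constant $\log\ests{\sym}=\log(\estcpmx{t}/\prev{t})$ out of each inner sum. Using $\sum_{\sym:\mul{\sym}=t}\ps{\sym}=\cpmx{t}$ yields
\[
\KL{\dist}{\est}=-H(\dist)+\sum_{t=0}^{n}\cpmx{t}\log\frac{\prev{t}}{\estcpmx{t}}.
\]
The key algebraic move is then to split $\log(\prev{t}/\estcpmx{t})=\log(\prev{t}/\cpmx{t})+\log(\cpmx{t}/\estcpmx{t})$, which separates the part depending only on the sample and the true distribution from the part depending on the estimator. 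The first part, after taking expectation over $\Xtn\sim\dist^n$, is exactly the best-natural benchmark $\rnNp$ identified by Lemma~\ref{lem:bestnatural}; the second part is $\sum_t\cpmx{t}\log(\cpmx{t}/\estcpmx{t})=\KL{\cpm}{\estcpm}$. Taking expectations and rearranging therefore delivers the claim, with $\rns{\est,\dist}$ understood as the natural-regret of $\est$ (as telegraphed by the paragraph preceding the lemma and by the ``same as estimating the total probability'' intuition there).

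The only delicate step is that regrouping. Once the reference piece $-H(\dist)+\EE[\sum_t\cpmx{t}\log(\prev{t}/\cpmx{t})]$ is recognized as $\rnNp$ via Lemma~\ref{lem:bestnatural}, what is left on both sides is manifestly $\EE[\KL{\cpm}{\estcpm}]$; no further estimates or auxiliary lemmas are required. I do not expect a genuine obstacle here—the proof is essentially a one-line reduction once the natural form $\ests{\sym}=\estcpmx{\mul{\sym}}/\prev{\mul{\sym}}$ is substituted, and all bookkeeping is driven by partitioning $\cX$ into the fibers $\{\sym:\mul{\sym}=t\}$.
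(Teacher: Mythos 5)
Your proof is correct and takes essentially the same route as the paper: partition $\cX$ by multiplicity, use naturality to write $\estxs{\sym}=\estcpmx{\mul\sym}/\prev{\mul\sym}$, split $\log(\prev t/\estcpmx t)$ into the estimator-independent piece that Lemma~\ref{lem:bestnatural} identifies as $\rnNp$ plus the piece that assembles into $\KL{\cpm}{\estcpm}$, and take expectations. You also correctly caught that the left-hand side of the lemma as printed should be the natural regret $\rnNss{\est}{\dist}=\rnqp-\rnNp$ rather than $\rns{\est,\dist}$; the paper's own proof and the way the result is used in Lemma~\ref{lem:labelbounded} confirm this reading.
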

\begin{proof}
For any estimator $\est$ and sequence $\seq$,
\begin{align*}
\sum_{\sym \in cX} \ps{\sym} \log \frac{1}{\estxs{\sym}}
& = \sum^{n}_{t=0} \sum_{\sym: \mul{\sym}=t} 
\ps{\sym} \log \frac{1}{\estxs{\sym}} \\
& = \sum^{n}_{t=0} \cpmx{t} \log \frac{\cpmx{t}}{\prev{t}\estxs{\sym}} +
\sum^{n}_{t=0} \cpmx{t} \log \frac{\prev{t}}{\cpmx{t}} \\
& =  \sum^{n}_{t=0}  \cpmx{t} \log \frac{\cpmx{t}}{\estcpmx{t}} +
\sum^{n}_{t=0} \cpmx{t} \log \frac{\prev{t}}{\cpmx{t}}.
\end{align*}
Thus by Lemma~\ref{lem:bestnatural},
\begin{align*}
\rns{\est,\dist} 
& =    -H(p) + \EE\left[  \sum^n_{t=0}\cpmx{t} \log \frac{\cpmx{t}}{\estcpmx{t}} +
\sum^{n}_{t=0} \cpmx{t} \log \frac{\prev{t}}{\cpmx{t}}  \right] 
 + H(p) -  \EE  \left[\sum^{n}_{t=0} \cpmx{t} \log  \frac{\prev{t}}{\cpmx{t}} \right] \\
& =  \EE\left[ \sum^{n}_{t=0}  \cpmx{t} \log \frac{\cpmx{t}}{\estcpmx{t}} \right] \\
& =   \EE[\KL{\cpm}{\estcpm}].
\end{align*}
\end{proof}
Taking maximum over all distributions $\dist$ and minimum over all estimators $q$ results in
\begin{Lemma}
\label{lem:labelbounded}
For a natural estimator $\est$
%
let  $\estcpmx{t} = \sum_{x: \mul{x} = t} \ests{x}$, then
\[
\rnNss{\est}{\iidk} = \max_{\dist \in \iidk} \EE[\KL{\cpm}{\estcpm}].
\]
Furthermore,
\[
\rnNDk =   \min_{\estcpm} \max_{\dist \in \iidk} \EE[\KL{\cpm}{\estcpm}].
\]
\end{Lemma}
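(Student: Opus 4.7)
My plan is to treat the two displayed equalities separately. The first is immediate from the preceding lemma: for every natural $\est$ and every $\dist$ we already have $\rnNss{\est}{\dist}=\EE[\KL{\cpm}{\estcpm}]$, so maximizing both sides over $\dist\in\iidk$ yields the first equation.

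For the ``furthermore'' clause I would first dispose of the easy direction $\rnNDk\le\min_{\estcpm}\max_{\dist\in\iidk}\EE[\KL{\cpm}{\estcpm}]$. Given any candidate $\estcpm$, set $\est(x)=\estcpmx{\mul{x}}/\prev{\mul{x}}$; this $\est$ is natural and its induced combined-mass estimator is exactly $\estcpm$, so the first equation gives $\max_\dist\rnNss{\est}{\dist}=\max_\dist\EE[\KL{\cpm}{\estcpm}]$, and taking the minimum over $\estcpm$ yields the upper bound on $\rnNDk$.

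The reverse inequality is the heart of the matter. The plan is to show that the unrestricted min-max $\min_\est\max_\dist\rnNss{\est}{\dist}$ is attained by a natural estimator; once that is in hand, the bijection $\est\leftrightarrow\estcpm$ given by $\estcpmx{t}=\sum_{x:\mul{x}=t}\ests{x}$ collapses the minimization to $\min_{\estcpm}$. Given an arbitrary $\est$, I would symmetrize in two stages. First, average over permutations $\pi$ of the $n$ sample positions, $\est'(\seq)\ed(1/n!)\sum_\pi\est(\pi(\seq))$; exchangeability of $\Xtn\sim\dist^n$ makes $\EE[\KL{\dist}{\est(\pi(\Xtn))}]=\rnqp$ for every $\pi$, and convexity of $\KL{\dist}{\cdot}$ in its second argument gives $\rns{\est',\dist}\le\rnqp$ for every $\dist$. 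Second, symmetrize over the $k!$ alphabet relabelings, $\est^{\mathrm{sym}}(\seq)(y)\ed(1/k!)\sum_{\sigma\in\Sigma_k}\est'(\sigma(\seq))(\sigma(y))$; a change of variable in the expectation (which introduces the permuted distribution $\distper$) together with convexity gives $\rns{\est^{\mathrm{sym}},\dist}\le(1/k!)\sum_\sigma\rns{\est',\distper}$, and then permutation-invariance of $\iidk$ and of $\rnNp$ yields $\max_\dist\rnNss{\est^{\mathrm{sym}}}{\dist}\le\max_\dist\rnNss{\est'}{\dist}\le\max_\dist\rnNss{\est}{\dist}$.

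The main obstacle is verifying that $\est^{\mathrm{sym}}$ is \emph{natural} rather than merely permutation-equivariant. Position-invariance of $\est'$ forces $\est'(\sigma(\seq))$ to depend on $\sigma(\seq)$ only through its multiset; for any $y,y'$ with $\mul{y}=\mul{y'}$ in $\seq$, the transposition $\tau=(y\ y')$ preserves this multiset, so re-indexing $\sigma\mapsto\sigma\tau$ in the defining sum forces $\est^{\mathrm{sym}}(\seq)(y)=\est^{\mathrm{sym}}(\seq)(y')$. The two-stage ordering matters here: symbol symmetrization alone would yield only equivariance, and only after averaging out sample-position order does the transposition argument place $\est^{\mathrm{sym}}$ in $\natest$, at which point the bijection closes the proof.
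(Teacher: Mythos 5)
Your proof is correct, and it is actually more careful than the paper's. The first displayed equality is indeed immediate from the preceding unnamed lemma (which, despite a notational slip in its statement, establishes $\rnNss{\est}{\dist}=\EE[\KL{\cpm}{\estcpm}]$ for natural $\est$), and you handle it the same way the paper does, by taking the maximum over $\dist$. For the ``furthermore'' part, the paper offers only a one-line remark (``Taking maximum over all distributions $\dist$ and minimum over all estimators $q$ results in\ldots''), which silently relies on the fact that the unrestricted minimization $\min_{\est}\max_{\dist}\rnNss{\est}{\dist}$ is attained within the class of natural estimators. That step is genuinely needed: the identity $\rnNss{\est}{\dist}=\EE[\KL{\cpm}{\estcpm}]$ only holds for natural $\est$, so one cannot pass from $\min_{\est}$ to $\min_{\estcpm}$ without it. Moreover, this equality is exactly what licenses the lower bound $\rnNDk\ge\tilde\Omega(\min(1/\sqrt n,k/n))$ stated after the lemma, so it cannot be discarded as cosmetic.

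You supply exactly the missing argument, and you correctly flag the real subtlety: alphabet symmetrization alone yields only permutation-equivariance, $q_{\sigma(\seq)}(\sigma(y))=q_{\seq}(y)$, which is weaker than naturality. Your two-stage ordering (first average over sample-position permutations to make the estimator multiset-dependent, then average over alphabet relabelings, then apply the transposition $\tau=(y\,y')$ when $\mul{y}=\mul{y'}$) is precisely what closes that gap; the reindexing $\sigma\mapsto\sigma\tau$ uses that $\tau(\seq)$ and $\seq$ have the same multiset, which holds exactly because the counts of $y,y'$ agree. The convexity and exchangeability steps, the permutation invariance of $\Deltak$ and of $\rnNp$, and the easy direction via the explicit natural estimator $\est(x)=\estcpmx{\mul x}/\prev{\mul x}$ are all sound. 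In short: same high-level decomposition as the paper for the first equality, but for the second you replace the paper's unstated appeal to symmetry with a complete and correct argument; this is a genuine improvement in rigor, not a different route.
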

Thus finding the best competitive natural estimator is same as finding the best estimator for the combined probability mass $\cpm$. 
\cite{AcharyaJOS13} proposed an algorithm for estimating $\cpm$
such that for all $k$ with probability $\geq 1-1/n$,
\[
\max_{\dist \in \iidk} [\KL{\cpm}{\estcpm} = \tcO \left(\frac{1}{\sqrt{n}} \right).
\]
The result is stated in Theorem $2$ of~\cite{AcharyaJOS13}.
One can convert this result to a result on expectation easily using the property that their estimator is bounded below by $1/2n$ and show that 
\[
\max_{\dist \in \iidk} \EE[\KL{\cpm}{\estcpm}] = \tcO \left(\frac{1}{\sqrt{n}} \right).
\]
A slight modification of their proof for Lemma ${17}$ and Theorem $2$ in their paper using $\sum^n_{t=1} \sqrt{\prev{t}} \leq \sum^n_{t=1} \prev{t} \leq k$ shows that their estimator $\hat{S}$ for the combined probability mass $S$ satisfies
\[
 \max_{\dist \in \iidk} \EE[\KL{\cpm}{\estcpm}]
= \tcO\left(\min \left( \frac{1}{\sqrt{n}},\frac{k}{n} \right) \right).
\]
Above equation together with Lemmas~\ref{lem:onelessthanother} and~\ref{lem:labelbounded} shows that
\begin{Theorem}
\label{thm:proposed}
For any $k$ and $n$,  the proposed estimator $\est$ in~\cite{AcharyaJOS13} satisfies
\[
\rnss{\PPprm}{q,\iidk} \leq \rnNss{\est}{\iidk} \leq \tcO\left(\min \left( \frac{1}{\sqrt{n}},\frac{k}{n} \right) \right).
\]
\end{Theorem}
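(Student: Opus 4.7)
The plan is to establish the theorem by chaining two pieces that are already in place. The first inequality, $\rnss{\PPprm}{q,\iidk} \leq \rnNss{\est}{\iidk}$, holds for every estimator by Lemma~\ref{lem:onelessthanother}, so no additional argument is needed; the genuine content of the theorem is the upper bound on the natural-oracle regret of the specific estimator from~\cite{AcharyaJOS13}.

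For that second inequality, I would invoke Lemma~\ref{lem:labelbounded}, which rewrites $\rnNss{\est}{\iidk}$ as $\max_{\dist \in \iidk} \EE[\KL{\cpm}{\estcpm}]$, with $\cpm$ the true combined probability mass indexed by multiplicity and $\estcpm$ the estimate implied by the natural estimator $\est$. This collapses the problem to bounding a KL divergence between two vectors of length at most $n+1$, a quantity that~\cite{AcharyaJOS13} are designed to control directly.

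From there the proof splits along the two branches of the minimum. For the $\tcO(1/\sqrt{n})$ branch, Theorem~2 of~\cite{AcharyaJOS13} gives a high-probability $\tcO(1/\sqrt{n})$ bound on $\KL{\cpm}{\estcpm}$ uniformly over $\iidk$; since their $\estcpm$ is always at least $1/(2n)$, the pointwise KL is at most $O(\log n)$, so the residual low-probability event contributes only $\tcO(1/n)$ and the expectation inherits the same rate. For the $\tcO(k/n)$ branch, I would revisit their Lemma~17 and replace each appearance of a bound on $\sum_t \sqrt{\prev{t}}$ by the cruder $\sum_t \prev{t} \leq k$, which trades a factor of $\sqrt{n}$ for a factor of $k$ and is tighter precisely when $k \ll \sqrt{n}$.

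I expect the main obstacle to be the clean reuse of the~\cite{AcharyaJOS13} machinery rather than any new mathematical idea. The $1/\sqrt{n}$ side is essentially a citation plus a routine tail-to-expectation conversion, but extracting the $k/n$ side requires tracking which intermediate quantities in their analysis scale like $\sum_t \sqrt{\prev{t}}$, verifying that substituting the trivial bound $\sum_t \prev{t}\leq k$ does not interact badly with the thresholds and truncations in their construction, and confirming that the resulting deterministic rate pushes through to the expectation over $\Xtn \sim \dist^n$ once the $O(\log n)$ worst-case KL is accounted for.
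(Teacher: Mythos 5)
Your proposal matches the paper's proof essentially step for step: the first inequality via Lemma~\ref{lem:onelessthanother}, the reduction to $\max_{\dist \in \iidk} \EE[\KL{\cpm}{\estcpm}]$ via Lemma~\ref{lem:labelbounded}, the $\tcO(1/\sqrt{n})$ branch from Theorem~2 of~\cite{AcharyaJOS13} converted from high probability to expectation using the $1/(2n)$ lower bound on $\estcpm$, and the $\tcO(k/n)$ branch by substituting $\sum_t \sqrt{\prev{t}} \leq \sum_t \prev{t} \leq k$ into their Lemma~17. You also correctly flag that the delicate point is confirming the crude bound propagates through the rest of their analysis, which is exactly the caveat the paper glosses over as a ``slight modification.''
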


\section{Experiments}
\label{sec:experiments}
For small values of $n$ and $k$ 
the estimator proposed in~\cite{AcharyaJOS13} 
 behaves as a combination of Good-Turing and empirical estimator.
Hence for experiments, we use the following combination of Good-turing and empirical estimators
\begin{equation*}
q(x)= \begin{cases}
        \frac{\mul{x}}{N} &\text{ if } \mul{x} > \prev{\mul{x}+1},
        \\
        \frac{\max(\prev{\mul{x}+1},1)}{\prev{\mul{x}}} \cdot \frac{\mul{x}+1}{N} & \text{ else,}
\end{cases} 
\end{equation*}
where $N$ is the normalization factor to ensure that the probabilities add to $1$.
We compare the above competitive estimator with several popular add-$\beta$ estimators $\hat{S}$ of the form
\[
q_{\hat{S}}(x) = \frac{t+\beta_{\hat{S}}(t)}{N(\hat{S})},
\]
where $N(\hat{S})$ is a normalization factor to ensure that the probabilities add up to $1$.

The Laplace estimator has $\beta_{L}(t) = 1 \, \forall \, t$. It is optimal when the underlying distribution is generated from the uniform prior on  $\iidk$. The  Krichevsky-Trofimov estimator 
 has $\beta_{KT}(t) = 1/2 \, \forall \, t$ and is min-max optimal for the cumulative regret or when the underlying distribution is generated from a Dirichlet-$1/2$ prior.
The Braess-Sauer estimator has
$\beta_{BS}(0) = 1/2, \beta_{BS}(1)= 1, \beta_{BS}(t) = 3/4 \ \forall \, t > 1$ and is min-max optimal for $\rns{\iidk}$.
We also compare against the best regret by any natural estimator, which simply estimates $p(x)$ by $q(x) = \frac{\cpmx{n(x)}}{\prev{n(x)}}$ (see Lemma~\ref{lem:bestnatural}).

We compare the above five estimators for six distributions with support $k = 10000$ and number of samples $n \leq 50000$. All results are averaged over $200$ trials.

\begin{figure}
\centering     
\subfigure[Uniform]{\label{fig:uniform}\includegraphics[width=80mm]{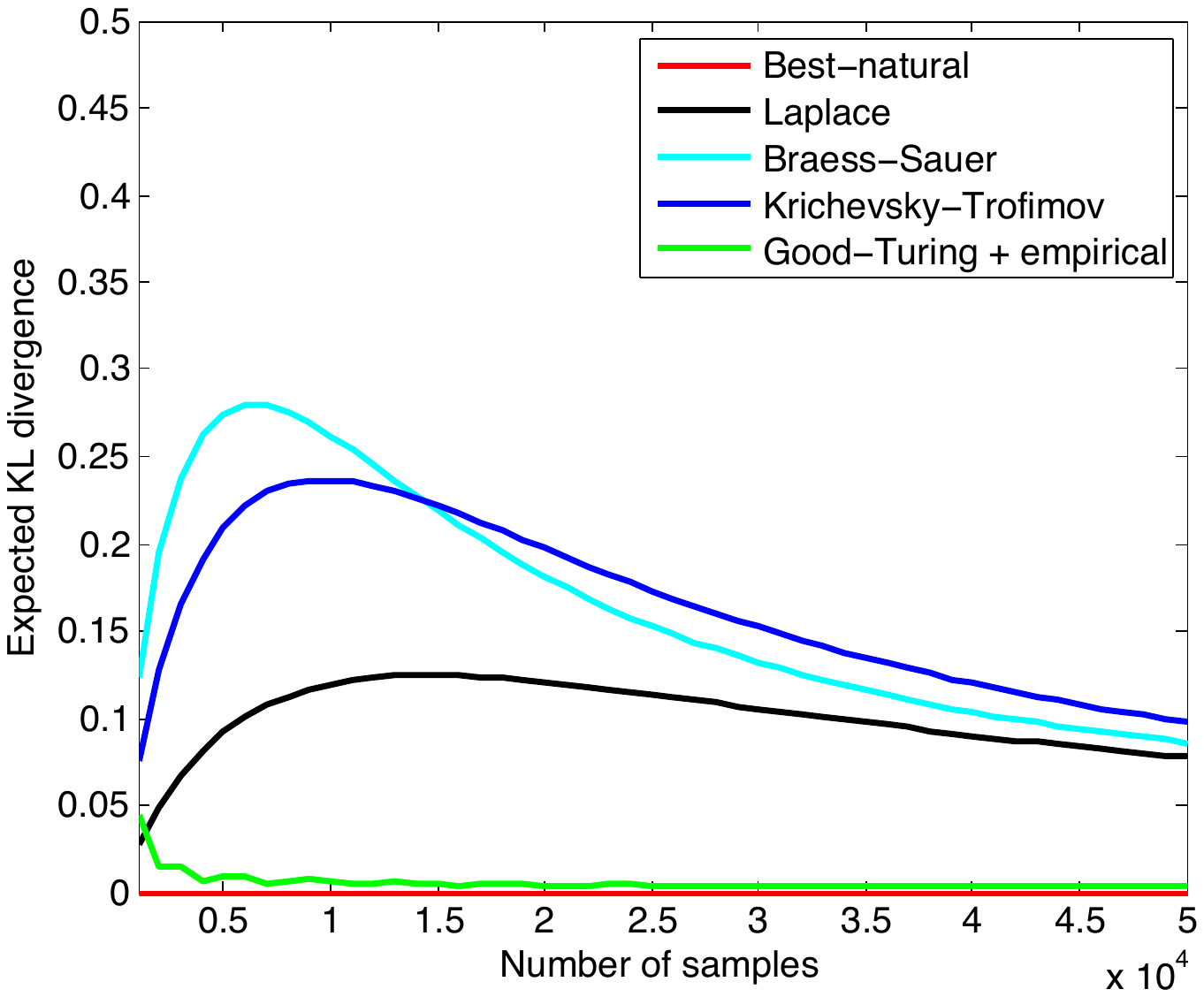}}
\subfigure[Step]{\label{fig:step}\includegraphics[width=80mm]{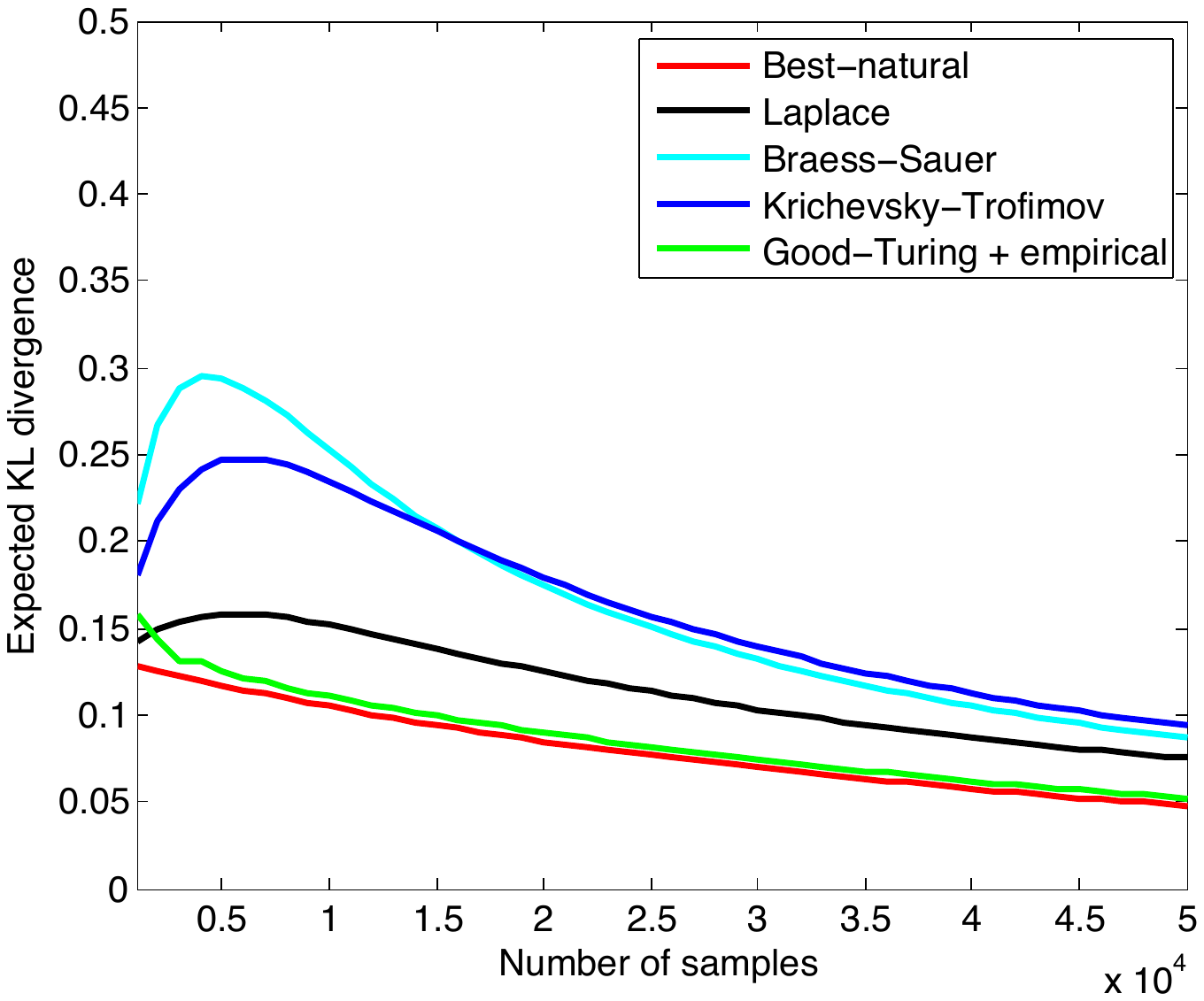}}
\subfigure[Zipf with parameter $1$]{\label{fig:zipf1}\includegraphics[width=80mm]{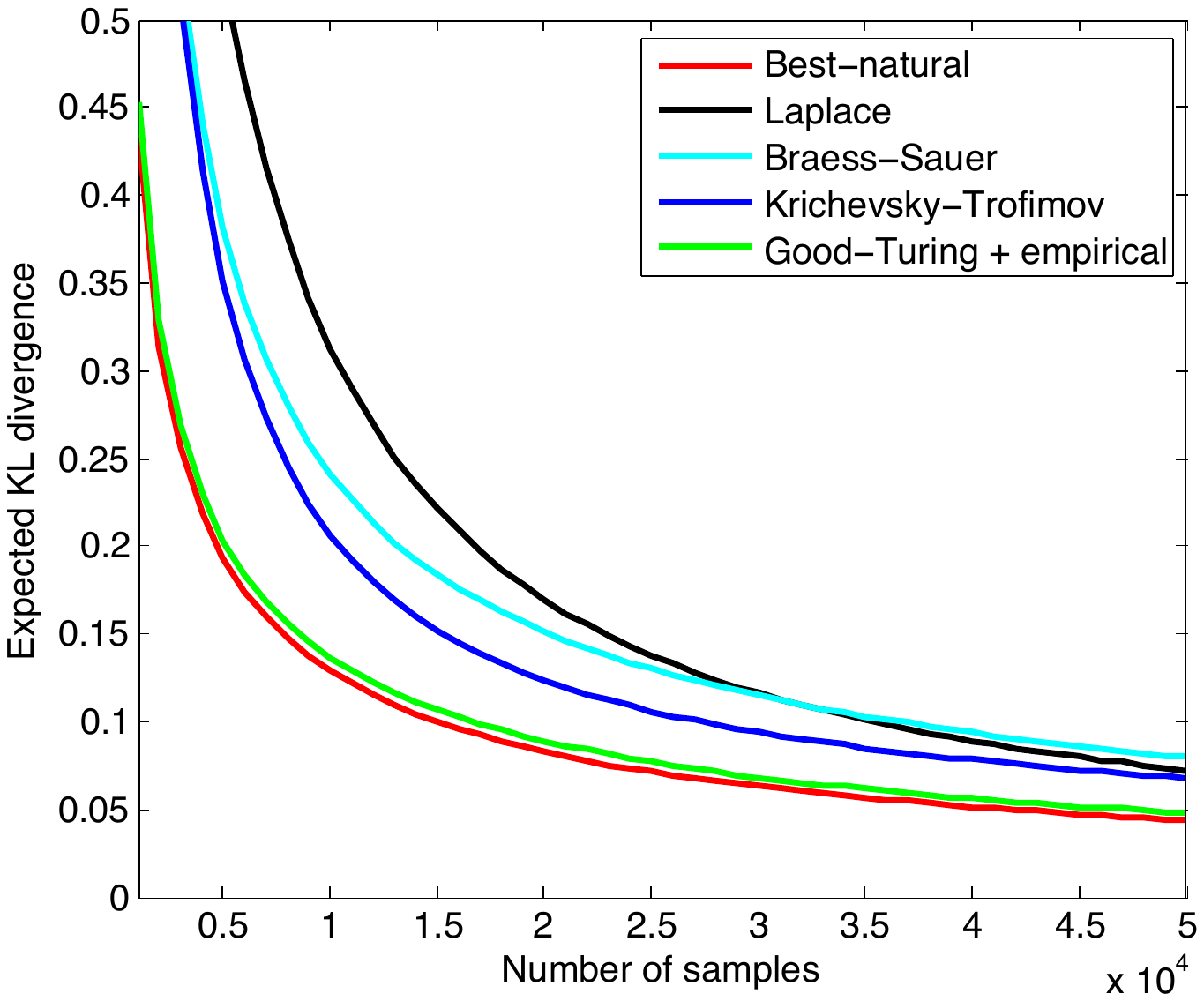}}
\subfigure[Zipf with parameter $1.5$]{\label{fig:zipf15}\includegraphics[width=80mm]{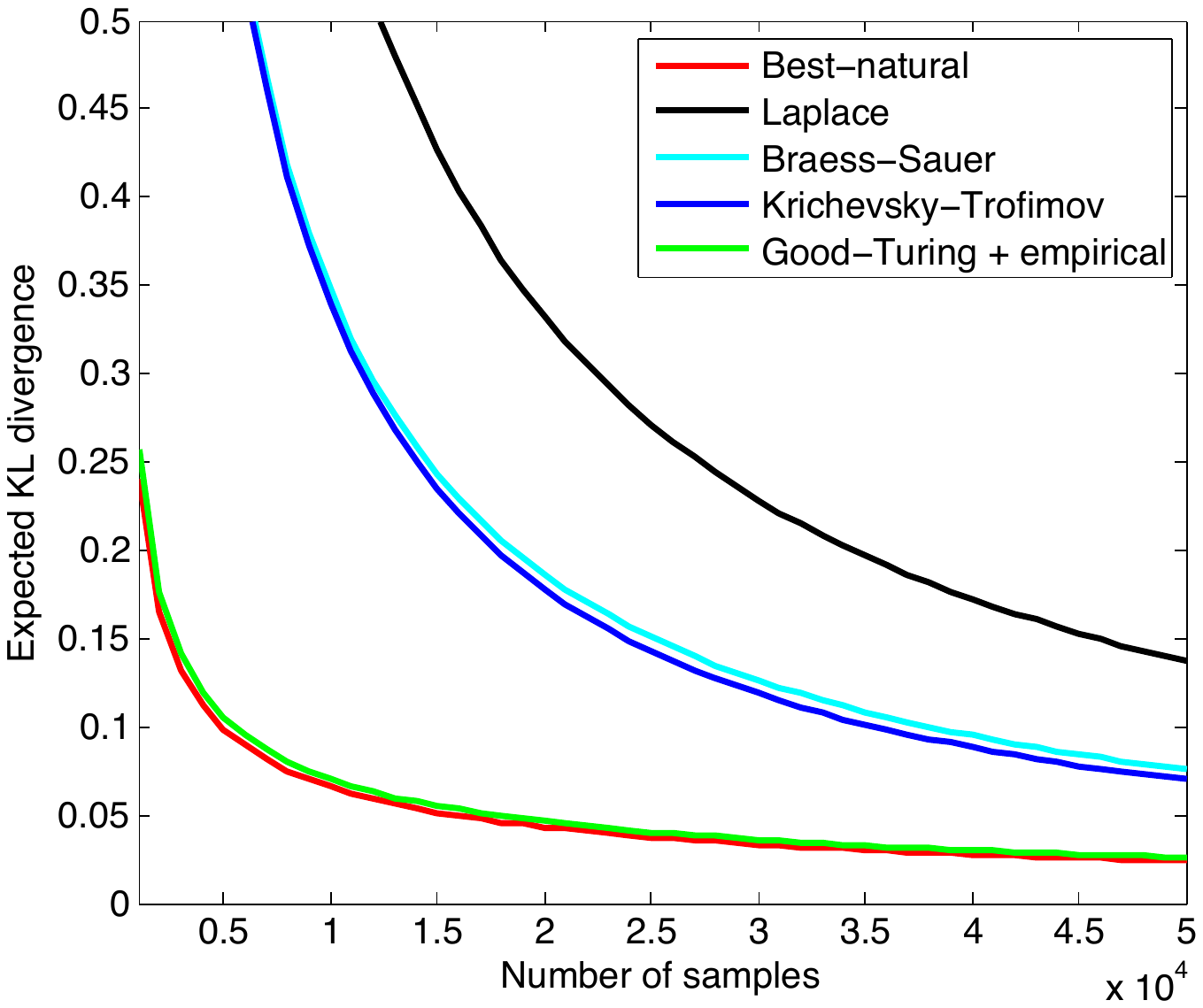}}
\subfigure[Uniform prior (Dirichlet $1$)]{\label{fig:random}\includegraphics[width=80mm]{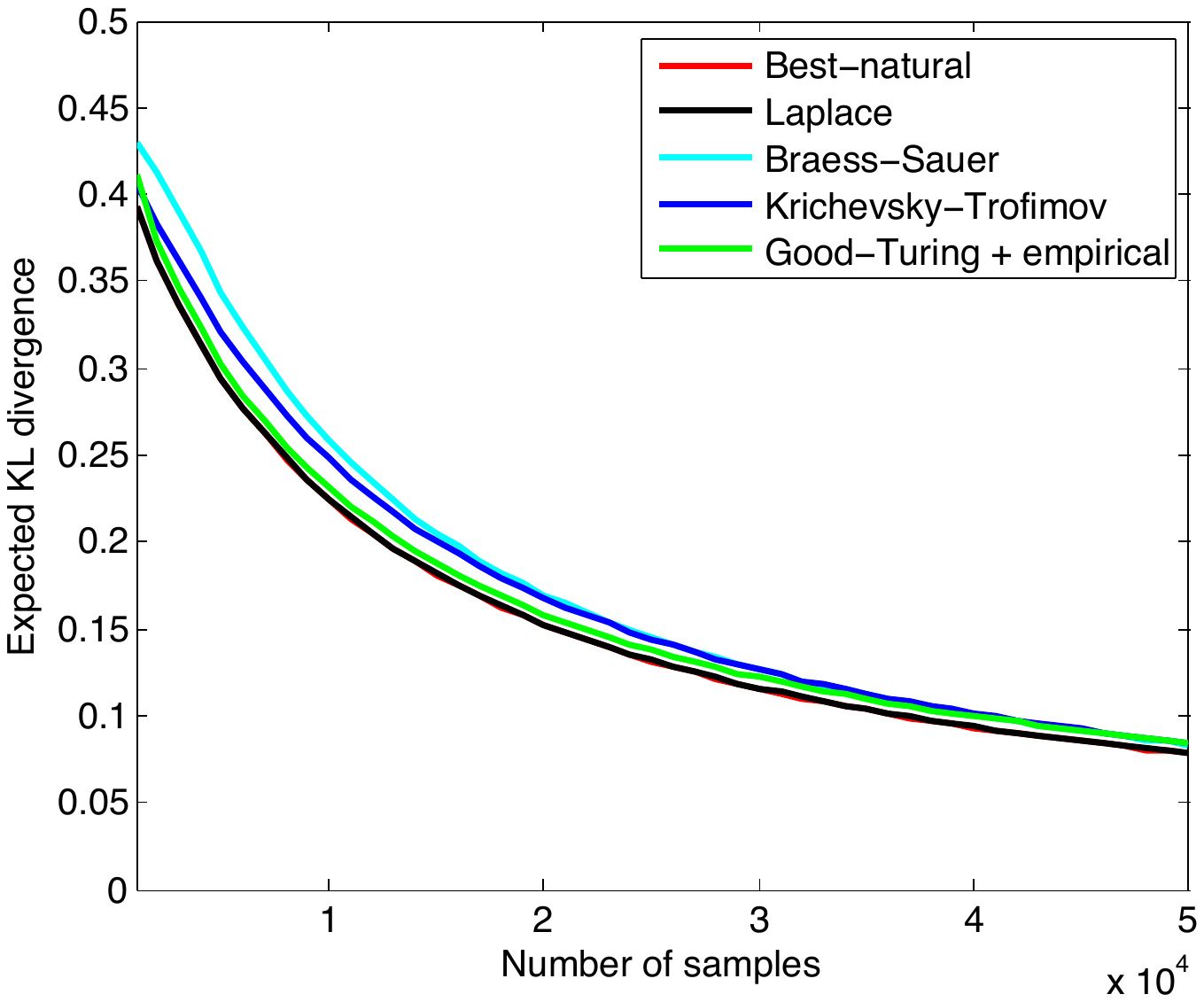}}
\subfigure[Dirichlet $1/2$ prior]{\label{fig:gammahalf}\includegraphics[width=80mm]{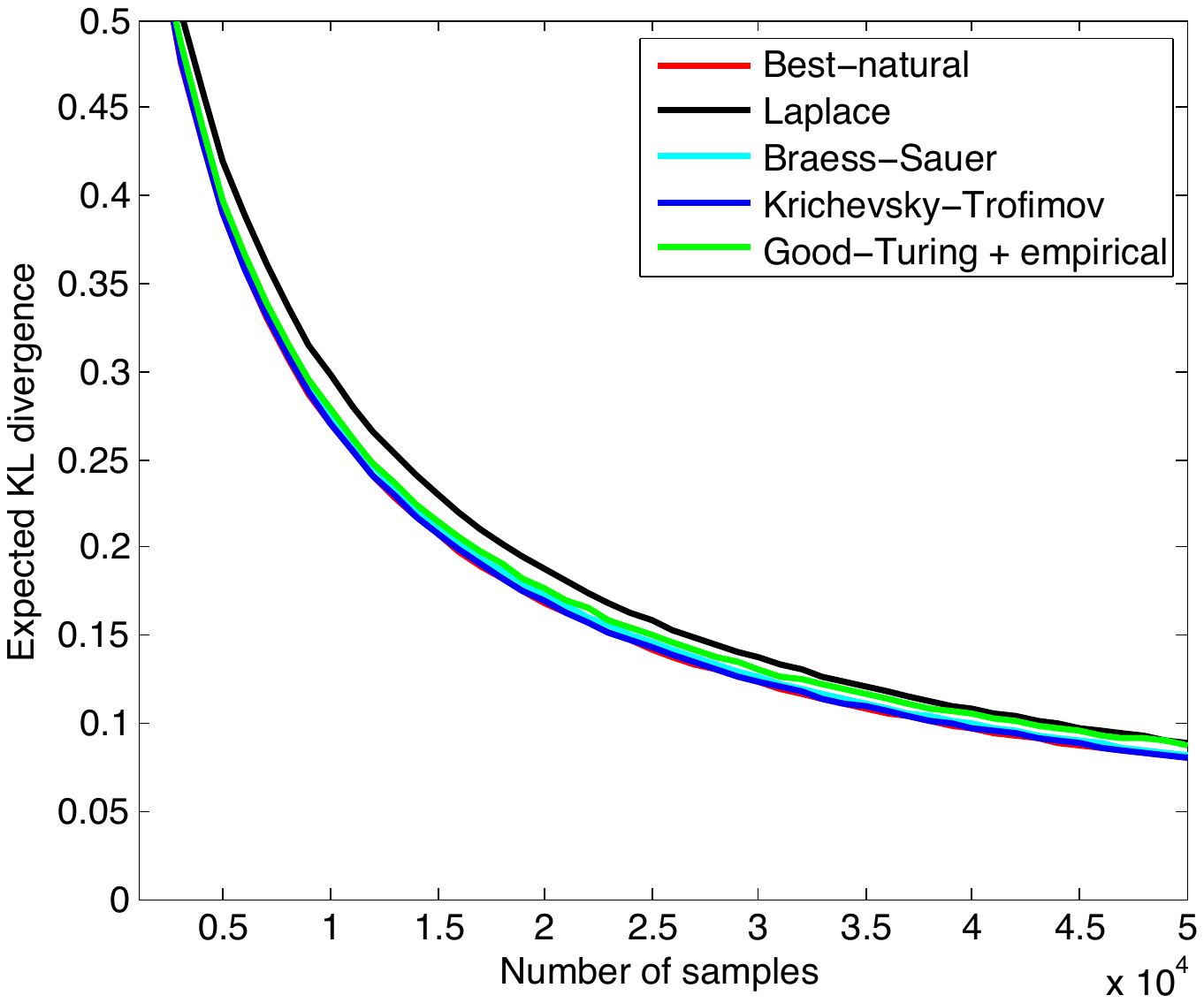}}
\caption{Simulation results for support $10000$, 
number of samples ranging from $1000$ to $50000$, averaged over $200$ trials.}
\label{fig:experiments}
\end{figure}

The six distributions  are  uniform distribution, step distribution with half the symbols having probability $1/2k$ and the other half have probability $3/2k$, Zipf distribution with parameter $1$ ($p(i) \propto i^{-1}$), Zipf distribution with parameter $1.5$ ($p(i) \propto i^{-1.5}$), a distribution generated by the uniform prior on $\iidk$, and a distribution generated from Dirichlet-$1/2$ prior.

The results are given in Figure~\ref{fig:experiments}.
The proposed estimator
uniformly performs well for  all the six distributions and is close to what the best natural estimator can achieve. Furthermore for Zipf, uniform, and step distributions the performance is significantly better.


The performance of other estimators depend on the underlying distribution. For example, since Laplace is the optimal estimator when the underlying distribution is generated from the uniform prior, it performs well in Figure~\ref{fig:random}, however performs poorly on other distributions. 

Furthermore, even though for distributions generated by Dirichlet priors, all the estimators have similar looking regrets  (Figures~\ref{fig:random},~\ref{fig:gammahalf}), the proposed estimator performs better than estimators which are not designed specifically for that prior.

%


\section{Acknowledgements}
Authors thank Jayadev Acharya, Ashkan Jafarpour, Mesrob Ohannessian, and Yihong Wu for helpful comments.
\bibliographystyle{plainnat}
\bibliography{abr,masterref}
\end{document}